\documentclass[11pt,letterpaper]{article}

\usepackage[letterpaper,margin=1in]{geometry}
\usepackage{amsmath,amssymb,amsthm}
\usepackage[round,authoryear]{natbib}
\usepackage[colorlinks=true,citecolor=blue,linkcolor=blue,urlcolor=blue]{hyperref}
\hypersetup{pdftitle={A 1+sqrt(e)/2 Lower Bound for Online Vertex Cover under Vertex Arrivals},pdfauthor={Anonymous Author},pdfsubject={Randomized online algorithms and competitive analysis},pdfkeywords={online vertex cover, vertex arrivals, competitive ratio, bipartite graph, oblivious adversary}}
\newcommand{\cref}[1]{\autoref{#1}}
\newcommand{\Cref}[1]{\autoref{#1}}

\newtheorem{theorem}{Theorem}
\newtheorem{lemma}[theorem]{Lemma}
\newtheorem{proposition}[theorem]{Proposition}
\newtheorem{corollary}[theorem]{Corollary}
\theoremstyle{definition}

\theoremstyle{remark}

\newcommand{\ALG}{\operatorname{ALG}}
\newcommand{\OPT}{\operatorname{OPT}}
\newcommand{\CR}{\operatorname{CR}}

\newcommand{\Rplus}{\mathbb{R}_{>0}}
\newcommand{\pos}[1]{\left[#1\right]_{+}}

\title{A New Lower Bound for Online Vertex Cover under Vertex Arrivals}
\author{Tianhang Lu}
\date{}

\begin{document}

\maketitle

\begin{abstract}
We prove that no randomized integral or fractional algorithm for online vertex cover under general vertex arrivals achieves a competitive ratio strictly below $1+\sqrt{e}/2\approx1.824360635$, even on bipartite graphs and against an oblivious adversary.
This improves the previous lower bound of approximately $1.753$.
Our proof extends the complete-bipartite alternating construction of Wang and Wong to an arbitrary number of alternations.
The resulting adversary is described by a monotone integral recurrence.
If the recurrence never violates the competitive budget, its iterates converge to an integrable fixed point; classifying all such fixed points forces the excess ratio to be at least $\sqrt{e}/2$.
A truncated discrete recurrence and a Riemann-sum argument convert every strict continuous violation into a finite, algorithm-dependent but realization-oblivious input.
We also exhibit a critical fixed point showing that $1+\sqrt{e}/2$ is the exact limit of this homogeneous complete-bipartite recurrence, rather than a numerical artifact.
\end{abstract}

\section{Introduction}

In the vertex-arrival online vertex cover problem, vertices arrive one at a time, and the arrival of a vertex reveals all of its edges to previously arrived vertices.
An online algorithm must maintain a vertex cover of the revealed graph, and every decision is irrevocable: weight assigned to a vertex, or a vertex inserted into an integral cover, can never be withdrawn.
The objective is to minimize the final cost relative to an optimal offline cover.

The arrival pattern sharply changes the difficulty of the problem.
When only one side of a bipartite graph arrives online, the tight competitive ratio is $e/(e-1)$, matching the classical thresholds for online bipartite matching and ski rental \citep{KarpEtAl1990,KarlinEtAl1994,WangWong2015}.
When vertices from both sides may alternate, \citet{WangWong2015} gave a fractional algorithm with competitive ratio about $1.901$ and a bipartite lower bound
$$
1+\sqrt{\frac{1+e^{-2}}{2}}\approx1.7534.
$$
The algorithm applies to general graphs, while the lower-bound construction already lies in bipartite graphs.
The gap between these two constants has remained open.

Online matching provides useful context but not an automatic lower-bound transfer.
Primal--dual algorithms give matching and vertex-cover solutions with the same upper ratios, and the best general-arrival matching bounds also reach approximately $1.901$ \citep{WangWong2015,TangZhang2024}.
However, matching requires separate hardness arguments: the general vertex-arrival ratio is tight for matching \citep{Tang2026Matching}, while the vertex-cover lower bound before this work remained $1.753$.
The same separation occurs under edge arrivals, where ratio $2$ is tight for matching \citep{GamlathEtAl2019} and, by a different argument, for vertex cover \citep{DemangePaschos2005,TangZhang2026}.
Online vertex cover is also a structured instance of online covering, but generic covering bounds do not capture the constant threshold created by alternating complete-bipartite blocks \citep{BuchbinderNaor2009}.

\paragraph{Our result.}
We improve the lower bound for randomized integral and fractional algorithms to an analytic constant.

\begin{theorem}\label{thm:main}
For every $c<1+\sqrt{e}/2$ and every randomized integral or fractional online vertex cover algorithm $\mathcal A$, there exists a finite vertex-arrival input $\sigma$ such that the underlying graph is bipartite and
\begin{equation}\label{eq:main-ratio} \mathbb{E}[\ALG_{\mathcal A}(\sigma)]>c\,\OPT(\sigma). \end{equation}
The input is fixed before the random choices of $\mathcal A$ are realized.
Consequently,
\begin{equation}\label{eq:main-bound} \CR_{\mathrm{bipartite}},\ \CR_{\mathrm{general}}\ \ge\ 1+\frac{\sqrt{e}}{2}\approx 1.8244. \end{equation}
\end{theorem}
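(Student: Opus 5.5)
By Yao-style reasoning it suffices to handle fractional algorithms. A randomized integral algorithm induces a fractional one through its marginal probabilities $x_v=\Pr[v\in\text{cover}]$, and the expected cost is $\sum_v x_v$. Because the input is fixed before the algorithm's coins are realized, the adversary may depend on the deterministic map from revealed prefixes to expected fractional values, but not on realizations. So the plan is to build an adversary against an arbitrary deterministic fractional algorithm $x$ that adapts to $x$ only through these expected values.

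\textbf{Construction.} I would extend the Wang--Wong alternating complete-bipartite construction. Vertices arrive in alternating blocks $L_1,R_1,L_2,R_2,\dots$, where each block is a stream of $n$ tiny vertices, each adjacent to all currently ``alive'' vertices on the opposite side. The input is homogeneous, so by symmetrization (averaging $x$ over relabellings within a block) the algorithm's state after each arrival reduces to a cumulative fraction profile.

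The adversary stops at the moment the running ratio $\ALG/\OPT$ would exceed $c$. Before stopping, it chooses a continuation that keeps $\OPT$ equal to the current smaller side, so that $\OPT$ grows linearly along the stream. If the algorithm never violates the budget $c\cdot\OPT$ at any stopping time, the budget constraints at all prefixes give inequalities on the profile. Passing to the continuum limit $n\to\infty$, the $k$-th alternation's profile $f_k$ is bounded by a monotone integral operator $T$ applied to $f_{k-1}$, roughly $f_k(t)\ge c-1-\int_0^t(\cdot)f_{k-1}$, and $T$ is order-preserving.

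\textbf{Fixed-point analysis.} By monotonicity, the iterates $T^k f_0$ form a monotone bounded sequence. They therefore converge (monotone convergence) to an integrable fixed point $f=Tf$. Differentiating the fixed-point equation gives an ODE on $[0,1]$ with boundary conditions from the alternation coupling. Solving it yields an exponential profile, and the solvability and nonnegativity constraints force $(c-1)^2\ge e/4$, i.e.\ $c\ge 1+\sqrt e/2$. Hence for $c<1+\sqrt e/2$ no fixed point exists, so some finite iterate $k$ strictly violates the budget in the continuum.

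\textbf{Discretization.} A strict continuum violation must be converted into a finite input. I would truncate to $K$ alternations and finite block sizes $n$, and compare the discrete recurrence with the integral one via Riemann sums with $O(1/n)$ error. The continuum violation has slack $\varepsilon>0$, so for $n$ large the discrete instance still satisfies $\mathbb E[\ALG]>c\,\OPT$. The resulting graph is bipartite by construction. Because the adversary's choices depend only on the expected fractional values, the input is algorithm-dependent but realization-oblivious. This gives \eqref{eq:main-ratio}, and \eqref{eq:main-bound} follows.

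\textbf{Main obstacle.} I expect the hardest step to be classifying all integrable fixed points. It requires showing that every fixed point, not only smooth ones, satisfies the ODE (by a bootstrapping regularity argument, since $Tf$ is absolutely continuous), and that the adversary's adaptive stopping produces exactly the budget inequality encoded in $T$. Uniform control of the Riemann error across the $K$ iterations is routine once $K$ is fixed before choosing $n$.
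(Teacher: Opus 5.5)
Your outline has the right skeleton (marginalization, alternating complete-bipartite blocks, a monotone operator iterated to an integrable fixed point, classification, and Riemann sums with the depth fixed before the scale). However, the two steps that carry the proof are missing.

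First, you never specify the operator. ``$f_k(t)\ge c-1-\int_0^t(\cdot)f_{k-1}$'' contains a placeholder. Your adversary, a single stream that stops when the running ratio exceeds $c$, does not by itself produce any such inequality, because the contradiction need not be visible on the alternating stream. The missing ingredient is a finite continuation budget. At a complete-bipartite prefix with sides of sizes $\ell$ and $r$, any weight $P$ already on the first side satisfies $P\le\beta r$, where $\beta=c-1$. The reason is that appending $r$ new vertices to the first side, adjacent to the whole second side, forces the algorithm to pay $r$ more on a $K_{r,r}$, while $\OPT\le r$.

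Two further facts are needed:
\begin{itemize}
\item A new universal vertex facing $m$ opposite vertices of total weight $P$ has weight at least $1-P/m$, since the lightest opposite vertex is at most the average. No symmetrization is needed; averaging over relabellings of sequentially arriving vertices is not an online operation anyway.
\item The iteration comes from \emph{counterfactual} continuations. These append vertices $a_1,a_2,\dots$ to the opposite side. Their weights are lower-bounded at the previous depth, and they share the same budget $\beta n$ with the old weight $P$.
\end{itemize}
Because the marginal at a public prefix is fixed, all these untaken branches constrain it simultaneously. In the ratio variable $t=n/m\in(0,\infty)$ this yields $D_{h+1}(t)=\bigl[1-\beta t+t\int_{1/t}^{\infty}D_h(u)\,du\bigr]_+$, subject to $\int_0^\infty D_h\le\beta$. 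The reciprocal $1/t$ records the swap of sides at each alternation.

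Second, your fixed-point step is asserted rather than derived, and its claimed form does not match the true structure. The fixed point lives on $(0,\infty)$. Differentiating gives the reciprocal functional equation $tD'(t)=D(t)+D(1/t)-1$, not an ODE on $[0,1]$. The substitution $y(x)=D(e^x)$ shows that $D(t)+D(1/t)$ is constant on a core $(1/R,R)$. Hence the profile is logarithmic, $\frac{\beta}{R}\ln\frac{R}{t}$, joined to $1-\beta t$ on $(0,1/R]$, not exponential. Matching at $1/R$ gives $\beta=R/(1+2\ln R)$. The constant $\sqrt e/2$ is the minimum of this over the free support parameter $R$, attained at $R=\sqrt e$. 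Your ``exponential profile forcing $(c-1)^2\ge e/4$'' only restates the target.

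Two further points are absent from your plan:
\begin{itemize}
\item The endpoint $1/R$ exists only if $\int D<\beta$ strictly. The paper rules out $\int D=\beta$, which would force $D\equiv 1/2$.
\item The tail integral runs to $\infty$, which means unboundedly many counterfactual vertices. So the discretization must first pass through a cutoff $M$ and use monotone convergence in $M$ before the Riemann-sum comparison gives a finite input.
\end{itemize}
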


The theorem excludes every competitive ratio strictly smaller than $1+\sqrt e/2$; it does not assert that equality is impossible.
Because the hard instances are bipartite, the general-graph statement follows by restriction and does not use an integrality gap or an odd-cycle inequality.
The lower bound continues to hold when the bipartition is revealed to the algorithm.

\paragraph{Proof idea.}
Write a hypothetical ratio as $1+\beta$.
At a complete-bipartite prefix $K_{m,n}$, a finite continuation shows that any persistent cost on one side is at most $\beta$ times the size of the other side.
This budget forces a newly arriving universal vertex to receive positive weight whenever the opposite side has small average weight.
Wang and Wong used this observation for two and three alternating blocks.
We iterate it: a future block constrains the present block, a still later block constrains that future block, and so on.

After scaling block sizes, the depth-$h$ lower bound becomes a function $D_h:\Rplus\to[0,1]$ satisfying
\begin{equation}\label{eq:intro-recurrence} D_0(t)=\pos{1-\beta t},\qquad D_{h+1}(t)=\pos{1-\beta t+t\int_{1/t}^{\infty}D_h(u)\,du}. \end{equation}
A $(1+\beta)$-competitive algorithm would require
\begin{equation}\label{eq:intro-budget} \int_0^{\infty}D_h(t)\,dt\le\beta \qquad\text{for every }h. \end{equation}
The functions $D_h$ increase with $h$.
If all inequalities in \cref{eq:intro-budget} held, monotone convergence would produce an integrable fixed point of the operator in \cref{eq:intro-recurrence}.
Every such fixed point has support $[0,R]$ and forces
\begin{equation}\label{eq:intro-threshold} \beta=\frac{R}{1+2\ln R}\ge\frac{\sqrt e}{2}. \end{equation}
Thus, for every $\beta<\sqrt e/2$, some finite depth violates \cref{eq:intro-budget}.

The continuous recurrence is an analysis device, not an infinite input.
We define a cutoff version with finite sums, prove its convergence to \cref{eq:intro-recurrence} under a common integer scaling, and use the strict integral violation to select a finite scale.
Every apparent look-ahead in the recurrence is a finite counterfactual continuation.
For a fixed randomized algorithm, the adversary reads only the deterministic marginal behavior associated with public prefixes and then selects one finite path before the algorithm's random seed is sampled.

\paragraph{Relation to the previous lower bound.}
The first level of the recurrence yields the two-alternation bound $1+1/\sqrt2$.
The next level reproduces the approximately $1.753$ three-alternation bound of \citet{WangWong2015}.
The improvement comes from allowing an unbounded but finite number of alternating blocks and solving the limiting recurrence analytically.
At the critical value $\beta_*=\sqrt e/2$, the recurrence admits an explicit fixed point that dominates every iterate, so this particular adversarial mechanism cannot prove a larger constant.

\paragraph{Organization.}
\Cref{sec:preliminaries} defines the model and reduces randomized algorithms to deterministic marginal trajectories.
\Cref{sec:finite-recurrence} develops the finite multi-alternation recurrence.
\Cref{sec:bridge} proves that the continuous formulation has finite oblivious witnesses.
\Cref{sec:fixed-point} classifies the limiting fixed points and proves \cref{thm:main}.
\Cref{sec:barrier} identifies the exact barrier of the recurrence, and \cref{sec:discussion} discusses the scope of the result.

\section{Preliminaries}\label{sec:preliminaries}

Let vertices of a graph $G=(V,E)$ arrive in an adversarial order.
When a vertex $v$ arrives, all edges from $v$ to earlier vertices are revealed.
An integral online algorithm maintains an increasing family of sets $C_t\subseteq V_t$ such that every edge revealed by time $t$ has an endpoint in $C_t$.
A fractional algorithm maintains a coordinatewise nondecreasing vector $y^{(t)}\in[0,1]^{V_t}$ satisfying
\begin{equation}\label{eq:edge-cover} y_u^{(t)}+y_v^{(t)}\ge1 \qquad\text{for every revealed edge }uv. \end{equation}
The final costs are $|C|$ and $\sum_v y_v$, respectively.
We use the purely multiplicative convention: an algorithm is $c$-competitive if its expected final cost is at most $c\OPT$ on every finite input.

An oblivious adversary may know the algorithm, including its distribution, but fixes the complete input before the algorithm's random choices are realized.
The input selected below can depend on the algorithm's behavior on public prefixes, as is standard for an impossibility result, but cannot depend on the random outcome of one execution.

\subsection{Marginalizing randomized algorithms}

The lower-bound argument is stated for deterministic fractional trajectories.
The following observation transfers it to both randomized integral and randomized fractional algorithms.

\begin{lemma}[Marginalization]\label{lem:marginalization}
Let $\mathcal A$ be a randomized integral or fractional online vertex cover algorithm.
For every public input prefix $P$, define
\begin{equation}\label{eq:marginal} y_v(P)=\begin{cases} \Pr[v\in C_{\mathcal A}(P)], & \mathcal A\text{ is integral},\\ \mathbb E[y_v^{\mathcal A}(P)], & \mathcal A\text{ is fractional}. \end{cases} \end{equation}
Then $y(P)$ is a deterministic feasible fractional vertex cover, its coordinates are nondecreasing along every prefix extension, and
\begin{equation}\label{eq:marginal-cost} \sum_v y_v(P)=\mathbb E[\ALG_{\mathcal A}(P)]. \end{equation}
\end{lemma}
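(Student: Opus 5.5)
The plan is to check the three claims of the lemma (feasibility, monotonicity, cost identity) one at a time, each by taking expectations of the corresponding pathwise statement. First I will fix what the random state is. Model $\mathcal A$ as a deterministic online procedure driven by a random seed $\omega$ that is drawn once, independently of the input. For each fixed $\omega$, the state after processing $P$ is then a deterministic function of $P$ and $\omega$. This gives a cover $C_{\mathcal A}(P)(\omega)$ in the integral case, or a vector $y^{\mathcal A}(P)(\omega)$ in the fractional case.

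Online algorithms may instead use fresh randomness at each step. In that case I will take $\omega$ to be the whole sequence of random bits. The bits used after $P$ are not examined while processing $P$, so the state at $P$ is still a measurable function of $(P,\omega)$ and depends only on the prefix. This is what makes $y(P)$ in \cref{eq:marginal} well defined as a function of the public prefix alone. It does not depend on the future input, because the algorithm cannot read the future.

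Next, in the integral case, write $\mathbf 1[v\in C(P)(\omega)]$ for the pathwise indicator. For each $\omega$ this is a feasible $\{0,1\}$ fractional cover, since every revealed edge has an endpoint in $C$. In both cases the pathwise vector $z(\omega)$ then satisfies three properties:
\begin{itemize}
\item its entries lie in $[0,1]$;
\item $z_u(\omega)+z_v(\omega)\ge1$ for every edge $uv$ revealed in $P$;
\item $z(\omega)$ is coordinatewise nondecreasing when $P$ is extended to $P'$, because decisions are irrevocable along the single execution with seed $\omega$.
\end{itemize}
Taking expectations preserves all three. Linearity of expectation turns $\mathbb E[z_u]+\mathbb E[z_v]\ge1$ into \cref{eq:edge-cover}. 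Expectation of a pointwise inequality $z_v(P)\le z_v(P')$ gives monotonicity. Expectations of $[0,1]$-valued variables stay in $[0,1]$. The cost identity \cref{eq:marginal-cost} is linearity of expectation over the finitely many vertices of $P$: in the integral case, $\mathbb E|C|=\sum_v\Pr[v\in C]$, and in the fractional case, $\mathbb E\sum_v y_v=\sum_v\mathbb E y_v$.

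The only delicate step is monotonicity along extensions. It relies on coupling the executions on $P$ and on $P'$ through the same seed $\omega$, so that the pathwise comparison is legitimate. I will state explicitly that the execution on $P'$ restricted to its first $|P|$ steps coincides with the execution on $P$ for the same $\omega$; this is the online property. The fact that the adversary fixes $\sigma$ before $\omega$ is sampled means the law of $\omega$ does not depend on the input, and this justifies the coupling. Everything else is routine linearity of expectation.
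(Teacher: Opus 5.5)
Your proposal is correct and follows the same route as the paper: take expectations of the pathwise edge constraints, the pathwise monotonicity, and the pathwise cost, using linearity and monotonicity of expectation. Your explicit same-seed coupling, together with the online property that the run on $P'$ agrees with the run on $P$ up to time $|P|$, spells out the monotonicity step that the paper states in one line as following coordinatewise from expectation.
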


\begin{proof}
For an integral algorithm, every realized cover satisfies $\mathbf 1_{{u\in C}}+\mathbf 1_{{v\in C}}\ge1$ on each revealed edge; taking expectations gives \cref{eq:edge-cover}.
For a randomized fractional algorithm, feasibility is preserved by expectation because the constraints are linear.
Monotonicity and \cref{eq:marginal-cost} also follow coordinatewise by linearity of expectation.
The distribution of $y(P)$ is integrated out, so $y(P)$ depends only on the algorithm and the public prefix, not on a realized random seed.
\end{proof}

This argument is a coordinatewise determinization, not an application of Yao's principle.
Conversely, when a consistent bipartition is known, the single-threshold rounding of \citet{WangWong2015} converts any monotone fractional trajectory into an integral cover with the same expected cost.
Thus integrality alone cannot strengthen a lower bound within the labeled bipartite model; the improvement must come from a stronger adversarial family.

\subsection{Complete-bipartite prefixes}

All graphs in the proof are built from complete-bipartite prefixes.
We write $K_{m,n}$ for a public prefix with $m$ arrived vertices on one side and $n$ arrived vertices on the other side, with all cross edges present.
New vertices are universal to the currently arrived vertices on the opposite side.
Vertices within a side are independent.

The first block may be introduced as isolated vertices, after which later arrivals reveal the cross edges.
If the algorithm knows the total number of vertices in advance, all leaves of the finite adversarial decision tree can be padded with isolated vertices to a common horizon.
The argument is therefore valid both with an unknown horizon and with a known finite horizon.

\section{A Finite Multi-Alternation Recurrence}\label{sec:finite-recurrence}

Throughout this section, suppose for contradiction that a deterministic fractional trajectory is $(1+\beta)$-competitive, where $0<\beta<1$.
Every lemma uses only finite input extensions.

\subsection{A finite continuation budget}

The basic continuation replaces the ``infinitely many future vertices'' used in the original alternating argument.

\begin{lemma}[Universal extension]\label{lem:universal-extension}
Consider a complete-bipartite prefix $K_{\ell,r}$ with sides $L$ and $R$.
Let $P$ be the current total fractional weight on any fixed subset of the old side $L$.
If the algorithm is $(1+\beta)$-competitive on every finite continuation, then
\begin{equation}\label{eq:extension-budget} P\le\beta r. \end{equation}
\end{lemma}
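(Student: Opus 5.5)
Fix the prefix $K_{\ell,r}$, a subset $S\subseteq L$, and let $P=\sum_{v\in S}y_v$ be its current total weight. We append $N$ new vertices, one at a time, to the side $R$; each new vertex is universal to $L$, i.e. adjacent to every vertex of $L$ and to nothing else. The new vertices form a finite continuation of the public prefix, so the final instance is $K_{\ell,r+N}$. This is the finite replacement for the ``infinitely many future vertices'' of the original argument; afterwards we let $N\to\infty$.

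On the final instance, $\OPT\le\ell$, since $L$ covers every edge. We only need this upper bound, because the competitive inequality has $\OPT$ on the right-hand side. Next we bound the algorithm's final cost from below. Weights are irrevocable, so the weight on $S$ is still at least $P$ at the end.

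For each new vertex $w$, every edge $vw$ with $v\in L$ forces $y_w\ge 1-y_v$. Hence $y_w\ge 1-\min_{v\in L}y_v$. Summing this pointwise bound alone is not enough to charge $P$, so we use an averaging bound instead. Summing the edge constraints over $v\in L$ gives $\ell\,y_w+\sum_{v\in L}y_v\ge \ell$, so $y_w\ge 1-Y_L/\ell$, where $Y_L$ is the final total weight on $L$. Write $Y_L=P+Q$, where $Q\ge0$ is the final weight on $L\setminus S$ and $P$ now denotes the final weight on $S$; this is at least the original $P$, which only helps. The algorithm's cost is then at least
\[
P+Q+N\Bigl(1-\tfrac{P+Q}{\ell}\Bigr).
\]
This alone is too weak to give the budget, because the natural cover of the final instance is $L$, with cost $\ell$, rather than $R$ with the new vertices.

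The fix is to use the other side. Retaining the old side $R$ together with all new vertices, namely the whole enlarged side $R\cup\{\text{new}\}$, gives a cover of size $r+N$. So $\OPT=\min(\ell,r+N)$, and this equals $\ell$ once $N$ is large. The bound we want, $P\le\beta r$, involves $r$ and not $\ell$. This suggests a second continuation: after the first, add $M$ new vertices to the side $L$, each universal to the current $R$-side. Then $L$ becomes large and $R$ is the cheap cover, and the point is to make the optimum collapse to $r$ while $P$ stays sunk.

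Concretely, the continuation should add many vertices on the $L$ side, universal to $R$. Then $\OPT\le r$, because the $r$ vertices of $R$ cover every edge. The edges from the new $L$-vertices to $R$ force, by the same averaging, about $M(1-Y_R/r)$ extra algorithm weight. So $\ALG\ge P+Y_R+M\,(1-Y_R/r)$. Take $M\ge r$, so that the coefficient of $Y_R$ is nonpositive. If $Y_R\le r$ this gives $\ALG\ge P+r$ by taking $M=r$ exactly: then $Y_R+r-Y_R=r$. If instead $Y_R>r$, then $\ALG\ge P+Y_R>P+r$ directly. Either way $\ALG\ge P+r$ while $\OPT\le r$.

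Competitiveness then gives $P+r\le(1+\beta)r$, which is $P\le\beta r$. The only delicate step is choosing the continuation so that $\OPT$ becomes exactly $r$ while $S$'s weight is counted separately from the forced weight. Adding exactly $r$ universal vertices to $L$ is what achieves this, and it keeps the continuation finite and independent of randomness.
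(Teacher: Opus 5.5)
Your final construction --- appending exactly $r$ vertices to $L$, each universal to $R$, so that $\OPT\le r$ while the weight on $R$ plus the new vertices is at least $r$ and is disjoint from the sunk weight $P$ --- is correct and is essentially the paper's proof. Summing $r\,y_w+Y_R\ge r$ over the new vertices is the uniform fractional-matching version of the paper's perfect-matching bound on $K_{r,r}$; it gives at least $r-Y_R$ on the new vertices in all cases, so your case split is unnecessary. Drop the preliminary continuation on the $R$ side entirely (your phrase ``after the first''): if those vertices were kept, the original $R$ would no longer cover the graph and $\OPT\le r$ would fail.
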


\begin{proof}
Append $r$ new vertices to side $L$, each adjacent to all vertices of $R$.
The new vertices and $R$ induce $K_{r,r}$ and hence contain a matching of size $r$.
Every fractional vertex cover pays at least $r$ on these $2r$ vertices, while the old cost $P$ persists on a disjoint set.
The algorithm therefore pays at least $P+r$.
Selecting all vertices of $R$ is an offline cover of the entire extended graph, so $\OPT\le r$.
Competitiveness gives $P+r\le(1+\beta)r$, which is \cref{eq:extension-budget}.
\end{proof}

The same proof applies samplewise to integral covers.
Its role is to cap every block of irrevocable old weight by a finite matching certificate.

\subsection{The discrete deficit recursion}

Fix a cutoff $M>1$.
For positive integers $m,n$, define
\begin{equation}\label{eq:discrete-base} A_0^{(M)}(m,n)=\pos{1-\beta\frac{n}{m}}, \end{equation}
and, for $h\ge0$, define
\begin{equation}\label{eq:discrete-recursion} A_{h+1}^{(M)}(m,n)=\pos{1-\beta\frac{n}{m}+\frac{1}{m}\sum_{j=1}^{\max\{0,\lfloor Mn-m\rfloor\}}A_h^{(M)}(n,m+j)}. \end{equation}
The quantity $A_h^{(M)}(m,n)$ is a lower bound on the marginal of the $n$th vertex in one side when the opposite side has size $m$, certified by at most $h+1$ further alternating tests.

To see the base case, let $P$ be the total current weight on the $m$ opposite vertices.
By \cref{lem:universal-extension}, $P\le\beta n$.
At least one opposite vertex has weight at most $P/m$, and the new universal vertex must cover its incident edge, so its marginal is at least $1-P/m$, giving \cref{eq:discrete-base}.

For the inductive step, consider the counterfactual continuation that adds new vertices $a_1,a_2,\ldots$ to the opposite side.
After $a_j$ arrives, the two side sizes are $n$ and $m+j$ with their roles exchanged.
Unless a finite contradiction has already occurred in a deeper continuation, the induction hypothesis gives
\begin{equation}\label{eq:future-lower} y_{a_j}\ge A_h^{(M)}(n,m+j). \end{equation}
Let $P$ again denote the weight on the original $m$ opposite vertices.
The persistent cost $P+\sum_jy_{a_j}$ lies on one side of a complete-bipartite prefix whose other side has size $n$, so \cref{lem:universal-extension} bounds it by $\beta n$.
Consequently,
\begin{equation}\label{eq:old-side-upper} P\le\beta n-\sum_j A_h^{(M)}(n,m+j). \end{equation}
Combining the average bound $\min y_u\le P/m$ with the covering constraints gives \cref{eq:discrete-recursion}.
The cutoff retains only the finitely many indices with $(m+j)/n\le M$; discarded nonnegative terms can only weaken the lower bound.

The outermost phase gives the budget that the recursion must satisfy.

\begin{lemma}[Discrete necessary condition]\label{lem:discrete-budget}
If a deterministic fractional algorithm is $(1+\beta)$-competitive on every finite input, then for every $M>1$, every $h\ge0$, and all positive integers $d,k$,
\begin{equation}\label{eq:discrete-budget} \sum_{i=1}^{k}A_h^{(M)}(d,i)\le\beta d. \end{equation}
\end{lemma}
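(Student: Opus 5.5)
The plan has two parts: first make the informal derivation of \cref{eq:discrete-recursion} rigorous as a pointwise lower bound on marginals, then run the outermost phase. The claim is the following. For every complete-bipartite prefix in which one side has size $m$ and the other side has just received its $n$th vertex $v$, the deterministic trajectory satisfies $y_v\ge A_h^{(M)}(m,n)$. I will prove this by induction on $h$, simultaneously for all $m,n$ and all such prefixes.

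For the base case $h=0$, apply \cref{lem:universal-extension} with $L$ the size-$m$ side and $R$ the side of $v$, which now has $n$ vertices. This gives $P\le\beta n$. Some vertex $u$ on the size-$m$ side then has $y_u\le P/m$, and since the edge $uv$ must be covered, $y_v\ge 1-\beta n/m$. Nonnegativity of $y_v$ gives the positive part.

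For the inductive step, fix the prefix and consider the finite counterfactual continuation. It appends $J=\max\{0,\lfloor Mn-m\rfloor\}$ universal vertices $a_1,\dots,a_J$ to the size-$m$ side. When $a_j$ arrives, the opposite side has size $n$ and $a_j$ is the $(m+j)$th vertex of its side. The induction hypothesis then gives $y_{a_j}\ge A_h^{(M)}(n,m+j)$ at the moment of arrival, and monotonicity keeps this true at the end. Now apply \cref{lem:universal-extension} at the end of this continuation, on the prefix $K_{m+J,n}$, to the whole enlarged side. The original $m$ vertices still carry at least their weight $P$ at the time $v$ arrived, again by monotonicity. Hence
\[
P+\sum_{j=1}^J A_h^{(M)}(n,m+j)\le\beta n .
\]
The argument from the base case, using the minimum-weight vertex among the original $m$, finishes the step.

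The one subtlety is that $y$ is a fixed deterministic trajectory, so its value at the prefix does not depend on which continuation is later chosen. Because of this, the counterfactual bounds are legitimate. Since the algorithm is assumed competitive on every finite input, the "unless a contradiction occurred" clause never triggers, and each inequality holds outright.

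For the outermost phase, use the prefix that starts with $d$ isolated vertices on one side and then adds $k$ universal vertices $b_1,\dots,b_k$ on the other side. When $b_i$ arrives, the opposite side has size $d$ and $b_i$ is the $i$th vertex of its side. The claim therefore gives $y_{b_i}\ge A_h^{(M)}(d,i)$. Finally, apply \cref{lem:universal-extension} with the old side $L$ taken to be the $k$ vertices $b_i$ and $R$ the $d$ original vertices. This bounds the total weight on the $b_i$ by $\beta d$, which is \cref{eq:discrete-budget}.

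I expect the main obstacle to be bookkeeping rather than anything conceptual. The orientation of the arguments of $A_h^{(M)}$ (which side is ``opposite'' and which index is ``the $n$th vertex'') has to match the definition exactly in both the induction and the outer phase. The edge case $J=0$ also has to reduce correctly to the base-case bound.
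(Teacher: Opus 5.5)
Your proposal is correct and follows essentially the same route as the paper. You prove $y_v\ge A_h^{(M)}(m,n)$ by induction on $h$ at every complete-bipartite prefix, using the counterfactual continuation $a_1,\dots,a_J$ and the universal-extension lemma. You then run the same outer phase of $d$ isolated vertices followed by $k$ universal vertices, closed off by one more universal extension.

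The only difference is presentational: you make explicit the appeals to monotonicity and to the fixed deterministic trajectory, which the paper's ``either a continuation fails or the bound holds'' phrasing leaves implicit.
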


\begin{proof}
Introduce $d$ vertices on one side and then $k$ universal vertices on the other side.
For each public prefix $K_{d,i}$, the preceding induction either finds a finite continuation on which competitiveness already fails or proves that the $i$th new vertex has weight at least $A_h^{(M)}(d,i)$.
If no earlier continuation fails, the total weight on the $k$ new vertices is at least the left-hand side of \cref{eq:discrete-budget}.
Applying \cref{lem:universal-extension} with these vertices as the persistent side and with the original side of size $d$ gives the upper bound $\beta d$.
\end{proof}

The proof should be read as a finite decision tree.
Different terms in \cref{eq:discrete-recursion} may be certified by different counterfactual continuations, but the marginal at their common public prefix is already fixed.
If every counterfactual must respect competitiveness, all lower bounds hold simultaneously at that prefix; if one does not, that finite branch is itself the desired witness.

\subsection{The first two levels}

Ignoring the cutoff for the moment, scale $n/m$ to a real variable $t$.
At depth zero, the integral budget is
\begin{equation}\label{eq:depth-zero} \int_0^{\infty}\pos{1-\beta t}\,dt=\frac{1}{2\beta}\le\beta. \end{equation}
Thus $\beta\ge1/\sqrt2$, recovering the two-alternation lower bound $1+1/\sqrt2$.
One more application of the recursion reproduces the three-alternation expression of \citet{WangWong2015} and its bound $1+\sqrt{(1+e^{-2})/2}$.
The following sections analyze all depths at once.

\section{From the Discrete Recurrence to a Finite Witness}\label{sec:bridge}

We now justify the continuous recurrence and, crucially, convert every strict integral violation back to a finite input.

\subsection{Truncated continuous recurrence}

For fixed $M>1$, define
\begin{equation}\label{eq:continuous-truncated-base} D_0^{(M)}(t)=\pos{1-\beta t}, \end{equation}
and
\begin{equation}\label{eq:continuous-truncated-recursion} D_{h+1}^{(M)}(t)=\pos{1-\beta t+t\int_{1/t}^{M}D_h^{(M)}(u)\,du}, \end{equation}
where the integral is zero when $1/t\ge M$.
Each $D_h^{(M)}$ is continuous and takes values in $[0,1]$ as long as its integral does not exceed $\beta$.

\begin{lemma}[Scaling]\label{lem:scaling}
Fix $h\ge0$ and $M>1$.
Let $m_q,n_q$ be positive integers such that $m_q\to\infty$ and $n_q/m_q\to t\in\Rplus$.
Then
\begin{equation}\label{eq:scaling-limit} A_h^{(M)}(m_q,n_q)\longrightarrow D_h^{(M)}(t). \end{equation}
The convergence is locally uniform over ratios $n/m$ in compact subsets of $\Rplus$.
\end{lemma}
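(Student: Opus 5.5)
We argue by induction on $h$, proving the stronger statement that for every compact $K\subset\Rplus$,
\[
\sup_{n/m\in K}\bigl|A_h^{(M)}(m,n)-D_h^{(M)}(n/m)\bigr|\to0 \qquad\text{as } m\to\infty .
\]
Here $n$ ranges over positive integers with $n/m\in K$. The sequential statement \cref{eq:scaling-limit} follows at once, because $n_q/m_q$ eventually lies in a compact neighbourhood of $t$ and $D_h^{(M)}$ is continuous.

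\emph{Base case.} For $h=0$ we have $A_0^{(M)}(m,n)=D_0^{(M)}(n/m)$ exactly, since both equal $\pos{1-\beta\, n/m}$.

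\emph{Inductive step.} Before the step we record that every $A_h^{(M)}$ and every $D_h^{(M)}$ is bounded, say by $1$. This holds by induction once each integral term is bounded. Even without the budget hypothesis, each $D_h^{(M)}$ is bounded on $[0,M]$ by a constant $C_h$ depending on $M$, and this crude bound suffices. Since $x\mapsto\pos{x}$ is $1$-Lipschitz, it is enough to compare the inner expressions. Write $t=n/m$. We must show
\[
\frac1m\sum_{j=1}^{J}A_h^{(M)}(n,m+j)\ \approx\ t\int_{1/t}^{M}D_h^{(M)}(u)\,du,
\qquad J=\max\{0,\lfloor Mn-m\rfloor\}.
\]
To do this we substitute $u_j=(m+j)/n$. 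The points $u_j$ form a grid of mesh $1/n$ running from just above $1/t$ up to at most $M$, and $\tfrac1m=t\cdot\tfrac1n$. So the left side is $t$ times a Riemann sum $\tfrac1n\sum_j A_h^{(M)}(n,n u_j)$.

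The argument now has two parts.
\begin{itemize}
\item[(i)] Replace each $A_h^{(M)}(n,nu_j)$ by $D_h^{(M)}(u_j)$. All ratios $u_j$ lie in the compact set $[1/\max K,\,M]$. Because $n=tm\ge(\min K)\,m\to\infty$, the induction hypothesis applied with first argument $n$ makes the replacement error uniformly small. Summed over at most $Mn$ terms and multiplied by $1/m$, the total error is at most $tM\varepsilon$.
\item[(ii)] Compare the Riemann sum of the continuous function $D_h^{(M)}$ on $[1/t,M]$ with its integral. The error is controlled by the modulus of uniform continuity of $D_h^{(M)}$ on the compact interval, plus $O(1/n)$ from the endpoints.
\end{itemize}
Both errors are uniform over $t\in K$. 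Continuity of $D_{h+1}^{(M)}$ follows from continuity of $t\mapsto t\int_{1/t}^M D_h^{(M)}$, which keeps the induction going.

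\emph{Main obstacle.} The delicate point is the uniformity in part (i). The induction hypothesis is applied at first argument $n$, not $m$, and over the ratio set $[1/\max K,M]$ rather than $K$. So the hypothesis must be formulated for \emph{every} compact set, with the limit taken as the first argument tends to infinity. This is exactly why we carry the locally uniform version through the induction. The case where $1/t$ is near or above $M$ needs separate care, because then $J=0$ or only a few terms appear. There we bound the boundary contributions by $(\text{number of terms})\cdot C_h/m=O(1/m)$ and use continuity of the integral in its lower limit.
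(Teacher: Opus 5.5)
Your proof is correct and follows the paper's route: induction on $h$ with the substitution $u_j=(m+j)/n$, the factorization $1/m=(n/m)\cdot(1/n)$ that turns the sum into a Riemann sum for $\int_{1/t}^{M}D_h^{(M)}$, and the $1$-Lipschitz property of the positive part. The only difference is bookkeeping: you carry the uniform-on-compacts statement through the induction explicitly, which is what the paper's step ``by the induction hypothesis the parenthesized expression converges'' tacitly needs, whereas the paper proves the pointwise limit and then recovers local uniformity by a subsequence argument.
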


\begin{proof}
The claim is immediate for $h=0$.
Assume it holds at depth $h$ and set
\begin{equation}\label{eq:riemann-coordinate} u_j=\frac{m_q+j}{n_q}. \end{equation}
The indices in \cref{eq:discrete-recursion} satisfy $u_j\in[1/t+o(1),M]$, and consecutive values of $u_j$ differ by $1/n_q$.
Moreover,
\begin{equation}\label{eq:riemann-factor} \frac{1}{m_q}\sum_jA_h^{(M)}(n_q,m_q+j)=\frac{n_q}{m_q}\left(\frac{1}{n_q}\sum_jA_h^{(M)}(n_q,m_q+j)\right). \end{equation}
By the induction hypothesis, the parenthesized expression converges to the moving-endpoint Riemann integral $\int_{1/t}^{M}D_h^{(M)}(u)\,du$, while $n_q/m_q\to t$.
Local uniformity follows from the sequential criterion: a bad sequence of lattice ratios in a compact interval has a convergent subsequence, and the pointwise argument applied to that subsequence contradicts a fixed positive error.
Finally, $x\mapsto\pos{x}$ is $1$-Lipschitz, so taking the positive part does not enlarge the approximation error.
\end{proof}

Remove the cutoff by defining
\begin{equation}\label{eq:continuous-base} D_0(t)=\pos{1-\beta t}, \end{equation}
\begin{equation}\label{eq:continuous-recursion} D_{h+1}(t)=\pos{1-\beta t+t\int_{1/t}^{\infty}D_h(u)\,du}. \end{equation}
The truncated operators are monotone in $M$ and in their input.
An induction using the monotone convergence theorem gives
\begin{equation}\label{eq:cutoff-limit} D_h^{(M)}(t)\uparrow D_h(t)\qquad\text{as }M\to\infty \end{equation}
for every fixed $h$ and $t$.

\subsection{Strict violations yield finite inputs}

Write
\begin{equation}\label{eq:integral-deficit} I_h=\int_0^{\infty}D_h(t)\,dt. \end{equation}

\begin{proposition}[Finite witness]\label{prop:finite-witness}
If $I_h>\beta$ for some finite $h$, then no $(1+\beta)$-competitive online vertex cover algorithm exists.
For every purported randomized algorithm, the contradiction is witnessed by one finite bipartite input fixed independently of the realized random seed.
\end{proposition}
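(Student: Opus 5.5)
The plan is to pass from the strict continuous violation $I_h>\beta$ to a strict violation of the discrete budget in \cref{lem:discrete-budget}. That lemma, applied through \cref{lem:marginalization}, then rules out every randomized algorithm.

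\emph{Step 1: remove the cutoff.} By \cref{eq:cutoff-limit}, $D_h^{(M)}\uparrow D_h$ pointwise as $M\to\infty$. The monotone convergence theorem then gives $\int_0^\infty D_h^{(M)}\to I_h>\beta$. Fix $M$ with $\int_0^\infty D_h^{(M)}(t)\,dt>\beta$. Because $D_h^{(M)}\ge0$ and the integral converges, I can also fix a finite $T$ with $\int_0^T D_h^{(M)}(t)\,dt>\beta+2\varepsilon$ for some $\varepsilon>0$. Truncating at $T$ keeps the domain compact.

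\emph{Step 2: Riemann sum.} Take $d=q\to\infty$ and $k=\lfloor Tq\rfloor$. Then
\[
\frac1d\sum_{i=1}^{k}A_h^{(M)}(d,i)
\]
is a Riemann sum with mesh $1/q$ for $\int_0^T$ of the function $t\mapsto A_h^{(M)}(q,\lfloor tq\rfloor)$, since the ratio $i/d$ plays the role of $t$. I split $[0,T]$ into $[0,\delta]$ and $[\delta,T]$. On $[\delta,T]$, \cref{lem:scaling} gives locally uniform convergence to $D_h^{(M)}$, so this part of the sum converges to $\int_\delta^T D_h^{(M)}$. The piece on $[0,\delta]$ has nonnegative terms, so dropping it only lowers the sum, and it costs at most $\delta$ in the limiting integral because all values lie in $[0,1]$.

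There is one point to check here: values in $[0,1]$ are guaranteed only while earlier integrals do not exceed $\beta$. I would handle this by choosing $h$ minimal with $I_h>\beta$. Alternatively, I would observe that lower bounds are all that is needed, so the argument is unaffected when a value is clipped at $1$. With $\delta<\varepsilon$, the conclusion is that for large $q$ we have $\sum_{i\le k}A_h^{(M)}(d,i)>\beta d$, contradicting \cref{eq:discrete-budget}.

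\emph{Step 3: obliviousness and finiteness.} For a purported randomized algorithm $\mathcal A$, \cref{lem:marginalization} produces a deterministic monotone fractional trajectory $y(P)$ with the same expected cost on every prefix. The proof of \cref{lem:discrete-budget} with the fixed $h,M,d,k$ is a finite decision tree. Its depth is bounded by $h+1$ alternations, and each branch has at most $Md$ new vertices, because the cutoff keeps only finitely many indices. The adversary walks this tree by reading the marginals $y(P)$ at public prefixes. The contradiction forces some leaf, or the main path itself, to have $\sum_v y_v>(1+\beta)\OPT$. That leaf is a finite bipartite input determined by $\mathcal A$'s distribution only, so it is fixed before the seed is drawn, and \cref{eq:marginal-cost} converts the fractional inequality into $\mathbb E[\ALG]>(1+\beta)\OPT$.

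I expect Step 2 to be the main obstacle. The difficulty is uniform control of the Riemann sum near $t=0$, where \cref{lem:scaling} does not apply, and reconciling the $[0,1]$ range caveat with the recursion. Using nonnegativity to discard the small-$t$ piece, together with the compact truncation at $T$, should resolve it.
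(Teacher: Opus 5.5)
Your proposal is correct and follows essentially the same route as the paper. It uses monotone convergence to fix a finite cutoff $M$ and a compact window away from $0$, the locally uniform form of \cref{lem:scaling} (discarding the small-$i$ terms by nonnegativity) to turn the strict integral gap into a strict violation of \cref{eq:discrete-budget}, and then \cref{lem:marginalization} plus the finite decision tree for obliviousness. One harmless slip: a branch can contain far more than $Md$ vertices, since block sizes may grow by a factor of up to $M$ at every alternation, but the argument only uses finiteness.
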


\begin{proof}
Choose $\varepsilon>0$ with $I_h>\beta+3\varepsilon$.
By \cref{eq:cutoff-limit} and monotone convergence, there exist finite $M$, $0<\delta<K<\infty$, and the same $\varepsilon$ such that
\begin{equation}\label{eq:strict-window} \int_{\delta}^{K}D_h^{(M)}(t)\,dt>\beta+2\varepsilon. \end{equation}
By the locally uniform form of \cref{lem:scaling}, the outer Riemann sum satisfies, for every sufficiently large finite integer $d$,
\begin{equation}\label{eq:outer-riemann} \frac{1}{d}\sum_{i=\lceil\delta d\rceil}^{\lfloor Kd\rfloor}A_h^{(M)}(d,i)>\beta+\varepsilon. \end{equation}
All summands are nonnegative, so \cref{eq:outer-riemann} contradicts \cref{eq:discrete-budget} with $k=\lfloor Kd\rfloor$.

For completeness, every sum in \cref{eq:discrete-recursion} has at most $\max\{0,\lfloor Mn-m\rfloor\}$ terms, and every use of \cref{lem:universal-extension} adds exactly the finite size of the opposite block.
Unrolling the induction therefore produces a finite decision tree of depth at most $h+2$.
For a randomized algorithm, apply \cref{lem:marginalization}; the adversary selects branches using the deterministic public-prefix marginals and fixes the resulting leaf before the random seed is realized.
The selected input is consequently oblivious to the realization.
\end{proof}

\begin{corollary}\label{cor:necessary-integrals}
Every $(1+\beta)$-competitive algorithm must satisfy
\begin{equation}\label{eq:necessary-integrals} I_h\le\beta\qquad\text{for all }h\ge0. \end{equation}
\end{corollary}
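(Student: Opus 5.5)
This is the contrapositive of \cref{prop:finite-witness}, so the plan is simply to run that proposition in reverse. Suppose a $(1+\beta)$-competitive algorithm $\mathcal A$ exists, randomized integral or fractional. By \cref{lem:marginalization}, its public-prefix marginals form a deterministic, monotone, feasible fractional trajectory with the same expected cost on every finite input. That trajectory is therefore also $(1+\beta)$-competitive, and the hypotheses of \cref{sec:finite-recurrence} and \cref{lem:discrete-budget} apply to it.

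Now fix $h\ge0$ and assume, for contradiction, that $I_h>\beta$. \Cref{prop:finite-witness} then produces a finite bipartite input $\sigma$, fixed independently of the random seed, on which $\mathbb E[\ALG_{\mathcal A}(\sigma)]>(1+\beta)\OPT(\sigma)$. This contradicts the assumed competitiveness, so $I_h\le\beta$. Since $h$ was arbitrary, this gives \cref{eq:necessary-integrals}.

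The one point to check is that $I_h$ is well defined in $[0,\infty]$, including the case where it might be infinite. This holds because each $D_h$ is a nonnegative measurable function. If $I_h=\infty$, choose a finite $\varepsilon$ with $I_h>\beta+3\varepsilon$. The monotone convergence step in the proof of \cref{prop:finite-witness} still yields a finite window and a finite cutoff $M$ with \cref{eq:strict-window}, because $\int_\delta^K D_h^{(M)}$ increases to $I_h$ as $M,K\to\infty$ and $\delta\to0$. No real obstacle remains: the substance is entirely in \cref{lem:scaling} and \cref{prop:finite-witness}, which we take as given.
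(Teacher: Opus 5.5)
Your proposal is correct and matches the paper, which states this corollary without a separate proof as the immediate contrapositive of \cref{prop:finite-witness}, with \cref{lem:marginalization} handling randomized algorithms. Your check of the case $I_h=\infty$ is accurate: it can occur once the budget is exceeded, and the same monotone-convergence step in that proposition's proof covers it, a point the paper leaves implicit.
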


The finite witness proposition also covers competitive definitions with an additive constant.
Taking a common blow-up of every block makes the additive term negligible after normalization, while the strict gap in \cref{eq:outer-riemann} remains.

\section{The Fixed-Point Threshold}\label{sec:fixed-point}

Define the operator
\begin{equation}\label{eq:operator} (T_{\beta}D)(t)=\pos{1-\beta t+t\int_{1/t}^{\infty}D(u)\,du}. \end{equation}
Then $D_{h+1}=T_{\beta}D_h$.
The operator preserves pointwise order, and $D_1\ge D_0$ because its additional tail integral is nonnegative.
It follows inductively that
\begin{equation}\label{eq:monotone-iterates} D_0\le D_1\le D_2\le\cdots. \end{equation}

\subsection{A fixed point from a nonviolating trajectory}

\begin{lemma}[Monotone limit]\label{lem:monotone-limit}
Suppose $I_h\le\beta$ for every $h$.
Then the iterates converge pointwise to an integrable fixed point $D$ satisfying
\begin{equation}\label{eq:fixed-point} D=T_{\beta}D,\qquad 0\le D\le1,\qquad I:=\int_0^{\infty}D(t)\,dt\le\beta. \end{equation}
Moreover, $I<\beta$.
\end{lemma}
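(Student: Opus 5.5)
Since the iterates increase pointwise by \cref{eq:monotone-iterates} and each lies in $[0,1]$ (the positive part of a quantity that is at most $1$ as long as the tail integral is at most $\beta$), I will define $D(t)=\lim_h D_h(t)=\sup_h D_h(t)\in[0,1]$. By the monotone convergence theorem, $\int_0^\infty D=\lim_h I_h\le\beta$, so $D$ is integrable with $I\le\beta$. For the fixed-point property, fix $t>0$. The tails satisfy $\int_{1/t}^\infty D_h(u)\,du\uparrow\int_{1/t}^\infty D(u)\,du$, again by monotone convergence. The map $x\mapsto\pos{1-\beta t+tx}$ is continuous and nondecreasing, so passing to the limit in $D_{h+1}=T_\beta D_h$ gives $D=T_\beta D$ pointwise. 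The bound $D\le1$ also follows directly from the fixed-point form, since the tail integral is at most $I\le\beta$ and hence $1-\beta t+t\int_{1/t}^\infty D\le 1-\beta t+\beta t=1$.

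The remaining and main step is strictness, $I<\beta$. My plan is to show that $I=\beta$ is incompatible with the fixed-point equation. Suppose $I=\beta$. For any $t>0$,
\[
1-\beta t+t\int_{1/t}^\infty D(u)\,du=1-t\int_0^{1/t}D(u)\,du,
\]
so $D(t)=\pos{1-t\int_0^{1/t}D}$. Set $F(s)=\int_0^sD$, which is continuous, nondecreasing, and satisfies $F(s)\le s$. Then $D(t)=\pos{1-tF(1/t)}$. For small $t$ we have $tF(1/t)\le t\beta\to0$, so $D(t)\to1$ as $t\to0$; as $s\to\infty$, $F(s)\to\beta$ and hence $D(t)\approx\pos{1-\beta t}$ only up to lower-order terms. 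The key observation is that $D(t)>0$ exactly when $F(1/t)<1/t$. Since $F(s)\le\min(s,\beta)$ and $\beta<1$, we get $F(s)<s$ for every $s>\beta$, i.e. $D(t)>0$ for every $t<1/\beta$. Moreover, $D(t)=1-tF(1/t)$ is bounded below by $1-\beta t$ there.

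To obtain a contradiction, I will compare $D$ with $D_0$. Since $D\ge D_1\ge D_0$, we have $D\ge\pos{1-\beta t}$, and $D_1$ is strictly larger than $D_0$ on a set of positive measure. Indeed, $D_1(t)-D_0(t)=t\int_{1/t}^\infty\pos{1-\beta u}\,du>0$ whenever $1/t<1/\beta$ and $1-\beta t>0$, and such $t$ exist because $\beta<1$. The difficulty is that this only gives $I>1/(2\beta)$, not $I<\beta$. So instead I plan to exploit the identity $D(t)=1-tF(1/t)$ on its support and integrate it. Substituting $s=1/t$ gives
\[
\int D=\int\bigl(1-F(s)/s\bigr)s^{-2}\,ds
\]
over the region where this quantity is positive. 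I will then use $F(s)=\int_0^sD$ together with Fubini to rewrite the right side in terms of $\int D$ itself, hoping to obtain an identity of the form $I=\Phi(I)$ whose only solution is incompatible with $I=\beta$. If this computation becomes unwieldy, my fallback is a perturbation argument. If $I=\beta$, then $D$ is a fixed point of the equation $D=\pos{1-tF(1/t)}$, which no longer involves $\beta$; I would show this forces $D\equiv$ something with total integral $\ne\beta$, for instance by checking that $D(t)=1$ near $0$ forces $F(s)=s$ on an interval, contradicting $F(s)\le\beta$. I expect this strictness step to be the genuine obstacle; everything before it is routine monotone convergence.
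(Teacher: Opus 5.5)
Your monotone-convergence part (pointwise limit, $\int D=\lim_h I_h\le\beta$, passage to the limit in the tails to get $D=T_\beta D$, and $D\le1$) is correct and matches the paper. The genuine gap is the strictness claim $I<\beta$, which the proposal does not prove. The Fubini step is left as a hope, and the fallback rests on a false premise: $D(t)=1$ never happens near $0$.

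In fact, write $F(s)=\int_0^sD$ as you do. Under $I=\beta$ you have $D(t)=\pos{1-tF(1/t)}$, and $F(1/t)>0$ for every $t>0$ because $D\ge D_0>0$ near $0$. Hence $D(t)<1$ for every $t>0$; the paper proves this for any $I\le\beta$. This is the missing idea. It gives $F(s)=\int_0^sD<s$ for every $s>0$, so the positive part is never active and $D(t)=1-tF(1/t)>0$ on all of $(0,\infty)$. Your weaker conclusion, positivity on $(0,1/\beta)$, already follows from $D\ge D_0$ and holds for every fixed point, including the compactly supported ones in \cref{lem:classification}. So by itself it cannot contradict anything.

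Once positivity on all of $(0,\infty)$ is known, the contradiction is with integrability. The paper differentiates the now inactive fixed-point equation to get $tD'(t)=D(t)+D(1/t)-1$. With $y(x)=D(e^x)$, the right side of $y'(x)=y(x)+y(-x)-1$ is even in $x$, so $y(x)+y(-x)$ is constant and hence $y'$ is constant. Boundedness then forces $y\equiv1/2$, which is not integrable on $\Rplus$.

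Your Fubini idea also closes the argument, but only after positivity is established, because then the substitution $s=1/t$ runs over all $s>0$. This gives $\int_0^\infty D=\int_0^\infty(s-F(s))s^{-3}\,ds=\int_0^\infty\frac{1-D(u)}{2u^2}\,du$. Since $1-D(u)=uF(1/u)\ge\beta u/2$ for small $u$, the right side diverges. So the outcome is divergence, not an identity $I=\Phi(I)$.

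If the positivity region were only $s>s_0$ for some unknown $s_0>0$, the same computation would give a finite quantity and no contradiction. Ruling that out is precisely the positivity step above.
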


\begin{proof}
By \cref{eq:monotone-iterates}, the pointwise limit $D=\lim_hD_h$ exists.
The monotone convergence theorem gives $\int D=\lim_h I_h\le\beta$ and permits passage to the limit in every tail integral, proving $D=T_{\beta}D$.
Since every tail is at most $I\le\beta$, the expression before the positive part in \cref{eq:operator} is at most $1$, so $D\le1$.

We next rule out $I=\beta$.
Let $H(q)=\int_q^{\infty}D(u)\,du$.
Because $D\ge D_0$ is positive near zero, for every $t>0$,
\begin{equation}\label{eq:strict-below-one} \beta-H(1/t)=\beta-I+\int_0^{1/t}D(s)\,ds>0, \end{equation}
and hence $D(t)<1$.
If $I=\beta$, define
\begin{equation}\label{eq:F-equality} F(q)=q-\beta+H(q)=\int_0^q(1-D(s))\,ds. \end{equation}
Then $F(q)>0$ for every $q>0$, so the positive part in the fixed-point equation is inactive for every $t>0$.
The function $D$ is locally absolutely continuous, and differentiation almost everywhere gives
\begin{equation}\label{eq:reciprocal-ode} tD'(t)=D(t)+D(1/t)-1. \end{equation}
With $y(x)=D(e^x)$, this becomes
\begin{equation}\label{eq:log-ode} y'(x)=y(x)+y(-x)-1. \end{equation}
The right-hand side is unchanged under $x\mapsto-x$, so $y'(x)=y'(-x)$ almost everywhere and $y(x)+y(-x)$ is constant.
Equation \ref{eq:log-ode} then makes $y'$ constant.
Boundedness on the whole real line forces $y'\equiv0$, and \cref{eq:log-ode} gives $y\equiv1/2$, contradicting the integrability of $D$ on $\Rplus$.
Therefore $I<\beta$.
\end{proof}

\subsection{Classification of integrable fixed points}

The fixed-point equation has a reciprocal geometry: the value at $t$ depends on the tail beginning at $1/t$.
This geometry determines the profile exactly.

\begin{lemma}[Fixed-point classification]\label{lem:classification}
Let $0<\beta<1$.
Suppose $D\ge D_0$ is an integrable fixed point of $T_{\beta}$ with $\int D\le\beta$.
Then there exists $R>1$ such that the positive set of $D$ is $(0,R)$, its closed support is $[0,R]$, and
\begin{equation}\label{eq:fixed-profile} D(t)=\begin{cases} 1-\beta t, & 0<t\le1/R,\\[2mm] \dfrac{\beta}{R}\ln\dfrac{R}{t}, & 1/R\le t\le R,\\[2mm] 0, & t\ge R. \end{cases} \end{equation}
The parameters satisfy
\begin{equation}\label{eq:beta-R} \beta=\frac{R}{1+2\ln R}. \end{equation}
\end{lemma}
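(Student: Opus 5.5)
The plan is to recast the fixed-point equation in terms of the tail function $H(q)=\int_q^\infty D(u)\,du$ and the auxiliary function $F(q)=q-\beta+H(q)$, exactly as in the proof of \cref{lem:monotone-limit}. Dividing the expression inside the positive part of \cref{eq:operator} by $t$ gives
\[
D(t)=\pos{t\,F(1/t)}.
\]
Since $D$ is integrable, $H$ is absolutely continuous, and $F'(q)=1-D(q)\ge0$ almost everywhere because $D\le1$. Hence $F$ is nondecreasing and continuous, with $F(0)=I-\beta$ and $F(q)\to\infty$.

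\textbf{Step 1: $I<\beta$.} I first rule out $I=\beta$ by repeating the argument of \cref{lem:monotone-limit}; that argument uses only $D\ge D_0$ and the fixed-point identity, both of which are hypotheses here. It shows $D<1$ on $(0,\infty)$, and then $F(q)=\int_0^q(1-D)>0$ for all $q>0$. The reciprocal ODE $y'(x)=y(x)+y(-x)-1$ then holds on all of $\mathbb{R}$, and it forces $y\equiv 1/2$, which contradicts integrability. So $F(0)<0$.

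\textbf{Step 2: the support is $(0,R)$ with $R>1$.} Let $q_0=\inf\{q:F(q)>0\}$, so that $0<q_0<\infty$, and set $R=1/q_0$. By monotonicity of $F$, $D(t)>0$ exactly when $1/t>q_0$, that is, when $t<R$. So the positive set is $(0,R)$ and the closed support is $[0,R]$. Since $D\ge D_0$ and $D_0>0$ on $(0,1/\beta)$ with $1/\beta>1$, we get $R\ge1/\beta>1$. For $t\le1/R$ we have $1/t\ge R$, so $H(1/t)=0$ and $D(t)=1-\beta t$. This gives the first branch of \cref{eq:fixed-profile}.

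\textbf{Step 3: the middle branch.} On $(1/R,R)$, which is invariant under $t\mapsto1/t$, we have $D(t)=tF(1/t)$ with no positive part. This $D$ is locally absolutely continuous, being a product of such functions. Differentiating almost everywhere gives $tD'(t)=D(t)+D(1/t)-1$. Put $a=\ln R$ and $y(x)=D(e^x)$ on $(-a,a)$. Then $y'(x)=y(x)+y(-x)-1$ almost everywhere. The right side is even, so $y'(x)=y'(-x)$ almost everywhere, and $(y(x)+y(-x))'=0$. Absolute continuity then makes $y(x)+y(-x)\equiv C$ constant. Therefore $y'\equiv C-1$, and $y(x)=C/2+(C-1)x$ is affine.

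Continuity of $D$ supplies two boundary conditions:
\begin{itemize}
\item at $t=R$, where $D$ leaves its positive set, $y(a)=0$;
\item at $t=1/R$, matching the first branch, $y(-a)=1-\beta/R$.
\end{itemize}
Adding the two conditions gives $C=1-\beta/R$, hence $C-1=-\beta/R$. The condition $y(a)=0$ then gives $C/2=\beta a/R$, so
\[
y(x)=\frac{\beta}{R}(\ln R-x),\qquad\text{i.e.}\qquad D(t)=\frac{\beta}{R}\ln\frac{R}{t}.
\]
Equating the two expressions for $C$ yields $1-\beta/R=2\beta\ln R/R$, which rearranges to $\beta=R/(1+2\ln R)$, i.e.\ \cref{eq:beta-R}.

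\textbf{Main obstacle.} I expect the delicate part to be the regularity and boundary bookkeeping, rather than the algebra. Three points need care:
\begin{itemize}
\item justifying that $D=tF(1/t)$ is locally absolutely continuous on the open interval, so that the almost-everywhere ODE integrates to an affine $y$;
\item confirming that $D$ is continuous at $R$ and at $1/R$;
\item the strictness $F(0)<0$, without which $q_0=0$ and the support argument collapses.
\end{itemize}
For strictness I rely on the \cref{lem:monotone-limit} argument, whose only inputs are the hypotheses $D\ge D_0$ and $D=T_\beta D$.
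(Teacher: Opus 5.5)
Your proposal is correct and follows essentially the same route as the paper: you rule out $I=\beta$ by the argument of \cref{lem:monotone-limit}, locate the positive set $(0,R)$ through the zero of $F(q)=q-\beta+H(q)$, and on the reciprocal core use $y(x)=D(e^x)$ to make $D(t)+D(1/t)$ constant. You then fix the constants from $D(R)=0$ and $D(1/R)=1-\beta/R$. Taking $q_0=\inf\{q:F(q)>0\}$ under mere monotonicity, and writing $y$ explicitly as an affine function, are cosmetic variants of the paper's strict-monotonicity step and its integrate-from-$R$-then-match step.
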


\begin{proof}
By \cref{lem:monotone-limit}, any fixed point arising from nonviolating iterates has $I<\beta$; the same proof applies under the assumptions of this lemma.
Let $H(q)=\int_q^{\infty}D(u)\,du$ and set
\begin{equation}\label{eq:F-general} F(q)=q-\beta+H(q). \end{equation}
We have $F(0)=I-\beta<0$, $F(q)\to\infty$, and $F'(q)=1-D(q)>0$ almost everywhere.
Thus $F$ is strictly increasing and has a unique zero, which we write as $1/R$.
The expression before the positive part in the fixed-point equation equals $tF(1/t)$.
It follows that $D(t)>0$ exactly for $0<t<R$.
Because $D\ge D_0$ and $D_0(1)=1-\beta>0$, we have $R>1$.

For $0<t\le1/R$, the tail beginning at $1/t\ge R$ vanishes, so
\begin{equation}\label{eq:outer-linear} D(t)=1-\beta t. \end{equation}
On the reciprocal core $1/R<t<R$, the positive part is inactive and \cref{eq:reciprocal-ode} holds almost everywhere.
Writing again $y(x)=D(e^x)$ shows that $D(t)+D(1/t)$ is constant on the core.
The endpoint values are
\begin{equation}\label{eq:endpoint-values} D(R)=0,\qquad D(1/R)=1-\frac{\beta}{R}, \end{equation}
so the constant is $1-\beta/R$.
Substitution into \cref{eq:reciprocal-ode} yields $tD'(t)=-\beta/R$, and integration from $t$ to $R$ gives the logarithmic part of \cref{eq:fixed-profile}.
Matching the linear and logarithmic expressions at $t=1/R$ gives
\begin{equation}\label{eq:matching} 1-\frac{\beta}{R}=\frac{2\beta}{R}\ln R, \end{equation}
which is equivalent to \cref{eq:beta-R}.
\end{proof}

\begin{proposition}[Critical excess ratio]\label{prop:critical-beta}
Every integrable fixed point described in \cref{lem:classification} satisfies
\begin{equation}\label{eq:beta-critical} \beta\ge\beta_*:=\frac{\sqrt e}{2}. \end{equation}
\end{proposition}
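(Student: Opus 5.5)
By \cref{lem:classification}, every integrable fixed point of the kind considered there is determined by a single parameter $R>1$, and its excess ratio is $\beta=g(R)$ with
\begin{equation*}
g(R)=\frac{R}{1+2\ln R}.
\end{equation*}
So the proposition reduces to a one-variable calculus statement: $\inf_{R>1}g(R)=\sqrt e/2$, attained at $R=\sqrt e$. No further use of the fixed-point equation, the integral budget, or the hypothesis $\beta<1$ should be needed beyond what the classification already supplies.

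First, on $(1,\infty)$ we have $1+2\ln R>1>0$, so $g$ is well defined, positive and differentiable. The quotient rule gives
\begin{equation*}
g'(R)=\frac{(1+2\ln R)-R\cdot\frac{2}{R}}{(1+2\ln R)^2}=\frac{2\ln R-1}{(1+2\ln R)^2}.
\end{equation*}
Second, I would read off the sign of $g'$. It is negative for $1<R<\sqrt e$, vanishes at $R=\sqrt e$, and is positive for $R>\sqrt e$. Hence $g$ is strictly decreasing on $(1,\sqrt e]$ and strictly increasing on $[\sqrt e,\infty)$. Its global minimum on $(1,\infty)$ is therefore
\begin{equation*}
g(\sqrt e)=\frac{\sqrt e}{1+2\cdot\tfrac12}=\frac{\sqrt e}{2}.
\end{equation*}
Combining this with \cref{eq:beta-R} gives $\beta=g(R)\ge\sqrt e/2=\beta_*$, which is \cref{eq:beta-critical}.

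There is no genuine obstacle; the one point to check is that the domain is exactly $R>1$, so the denominator cannot vanish and the minimizer $\sqrt e$ lies inside the admissible range. Both facts are guaranteed by \cref{lem:classification}. I would also remark that equality occurs only at $R=\sqrt e$. This is the critical profile that \cref{sec:barrier} presumably exhibits, and it shows the bound cannot be improved within this recurrence.
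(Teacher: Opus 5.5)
Your proposal is correct and matches the paper's proof. Both differentiate $R/(1+2\ln R)$ on $R>1$, obtain $(2\ln R-1)/(1+2\ln R)^2$, and read off the unique minimum $\sqrt e/2$ at $R=\sqrt e$; your explicit sign analysis and positivity check on the denominator just spell out what the paper leaves implicit.
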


\begin{proof}
For $R>1$, differentiate the right-hand side of \cref{eq:beta-R}:
\begin{equation}\label{eq:threshold-derivative} \frac{d}{dR}\frac{R}{1+2\ln R}=\frac{2\ln R-1}{(1+2\ln R)^2}. \end{equation}
The unique minimum occurs at $R=\sqrt e$, where the value is $\sqrt e/2$.
\end{proof}

\subsection{Proof of the main theorem}

\begin{proof}[Proof of \cref{thm:main}]
The claim is trivial for $c\le1$, so write $c=1+\beta$ with $0<\beta<\sqrt e/2$ and suppose that a randomized algorithm is $c$-competitive.
By \cref{lem:marginalization}, its expected behavior is a deterministic monotone fractional trajectory with the same cost on every public prefix.
By \cref{cor:necessary-integrals}, this trajectory would force $I_h\le\beta$ for every $h$.
Lemma \ref{lem:monotone-limit} would then produce an integrable fixed point, while \autoref{lem:classification} and \autoref{prop:critical-beta} would require $\beta\ge\sqrt e/2$, a contradiction.
Hence some finite $h$ satisfies $I_h>\beta$.
Proposition \ref{prop:finite-witness} converts this strict violation into a finite oblivious bipartite input on which \cref{eq:main-ratio} holds.
Because bipartite graphs form a subclass of general graphs, the same lower bound applies in the general-graph model.
\end{proof}

\section{The Exact Barrier of the Recurrence}\label{sec:barrier}

The preceding argument proves every constant strictly below $1+\sqrt e/2$.
We now show that the recurrence itself cannot cross the critical value.

Set $R_*=\sqrt e$ and $\beta_*=\sqrt e/2$, and define
\begin{equation}\label{eq:critical-profile} D_*(t)=\begin{cases} 1-\dfrac{\sqrt e}{2}t, & 0<t\le e^{-1/2},\\[2mm] \dfrac{1}{2}\ln\dfrac{\sqrt e}{t}, & e^{-1/2}\le t\le\sqrt e,\\[2mm] 0, & t\ge\sqrt e. \end{cases} \end{equation}

\begin{proposition}[Critical fixed point]\label{prop:critical-fixed-point}
The function $D_*$ satisfies
\begin{equation}\label{eq:critical-fixed-equations} T_{\beta_*}D_*=D_*,\qquad D_0\le D_*,\qquad \int_0^{\infty}D_*(t)\,dt=\frac{\sqrt e}{2}-\frac{1}{4\sqrt e}<\beta_*. \end{equation}
Consequently, the iterates at $\beta=\beta_*$ satisfy $D_h\le D_*$ and $I_h<\beta_*$ for every finite $h$.
\end{proposition}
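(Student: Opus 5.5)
The plan is to verify the three claims in \cref{eq:critical-fixed-equations} by direct computation, and then derive the consequence by induction using monotonicity of $T_{\beta_*}$. Throughout, write $R_*=\sqrt e$, so that $\beta_*/R_*=1/2$ and $1/R_*=e^{-1/2}$. First, continuity at $t=e^{-1/2}$ holds, since $1-\tfrac{\sqrt e}{2}e^{-1/2}=\tfrac12$ and $\tfrac12\ln(\sqrt e\cdot\sqrt e)=\tfrac12$.

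For the integral, I compute the two pieces separately. The linear piece is
\[
\int_0^{e^{-1/2}}\Bigl(1-\tfrac{\sqrt e}{2}t\Bigr)dt=e^{-1/2}-\tfrac14 e^{-1/2}=\tfrac34 e^{-1/2}.
\]
For the logarithmic piece, substitute $t=\sqrt e\,s$ with $s\in[e^{-1},1]$, which gives
\[
\tfrac{\sqrt e}{2}\int_{e^{-1}}^1(-\ln s)\,ds=\tfrac{\sqrt e}{2}\bigl(1-2e^{-1}\bigr).
\]
The sum is $\tfrac{\sqrt e}{2}-e^{-1/2}+\tfrac34e^{-1/2}=\tfrac{\sqrt e}{2}-\tfrac{1}{4\sqrt e}$, which is strictly below $\beta_*$.

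For the fixed-point identity, let $H(q)=\int_q^\infty D_*$. I split into three ranges of $t$.
\begin{itemize}
\item If $t\le e^{-1/2}$, then $1/t\ge\sqrt e$, so $H(1/t)=0$ and $T D_*(t)=1-\beta_*t=D_*(t)$.
\item If $t\ge\sqrt e$, then $1/t\le e^{-1/2}$, and $H(1/t)=I_*-\int_0^{1/t}(1-\beta_*s)\,ds$. The pre-positive-part expression is $tF(1/t)$ with $F(q)=q-\beta_*+H(q)$. Since $F'=1-D_*\ge0$ and $F(e^{-1/2})=e^{-1/2}-\beta_*+H(e^{-1/2})$, I check that $H(e^{-1/2})=\tfrac{\sqrt e}{2}(1-2/e)=\tfrac{\sqrt e}{2}-e^{-1/2}$. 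This gives $F(e^{-1/2})=0$, so $F(q)\le0$ for $q\le e^{-1/2}$ and the positive part returns $0=D_*(t)$.
\item If $e^{-1/2}<t<\sqrt e$, I compute $H(q)$ for $q\in[e^{-1/2},\sqrt e]$ in closed form, namely $H(q)=\tfrac12\bigl(\sqrt e-q-q\ln(\sqrt e/q)\bigr)$. I then check that $t\bigl(1/t-\beta_*+H(1/t)\bigr)$ simplifies to $\tfrac12\ln(\sqrt e/t)\ge0$. Setting $q=1/t$, the expression is $1-\tfrac{\sqrt e}{2}t+\tfrac t2\sqrt e-\tfrac12-\tfrac12\ln(\sqrt e\,t)=\tfrac12-\tfrac12\ln(\sqrt e\,t)=\tfrac12\ln(\sqrt e/t)$, as required.
\end{itemize}
This three-range computation is the main bookkeeping step. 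Alternatively, it can be checked by the ODE route of \cref{lem:classification}: the identity holds at the endpoint, and both sides satisfy $tD'=D(t)+D(1/t)-1$. I prefer the direct closed form because it is cleaner.

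To show $D_0\le D_*$, note that on $[0,e^{-1/2}]$ the two functions coincide. On $[e^{-1/2},\sqrt e]$, $D_*$ is convex in $t$ (since $-\ln t$ is convex), and it is tangent-matched at $t=e^{-1/2}$: its derivative there is $-\tfrac{1}{2t}=-\tfrac{\sqrt e}{2}$, equal to the slope of $1-\beta_*t$. A convex function lies above its tangent line, so $D_*\ge 1-\beta_*t$ there. Beyond $\sqrt e$ we have $1-\beta_*t<0$, so $D_0=0\le D_*$.

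Finally, $T_{\beta_*}$ is order-preserving, since a larger $D$ gives larger tail integrals and the positive part is monotone. Starting from $D_0\le D_*$, induction gives $D_h=T^hD_0\le T^hD_*=D_*$. Integrating yields $I_h\le\int D_*<\beta_*$.
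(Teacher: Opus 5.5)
Your proposal is correct and follows essentially the same route as the paper: direct integration on the support, a sign check of the pre-positive-part expression for $t\ge\sqrt e$, a slope comparison for $D_0\le D_*$ (your convexity/tangent-line phrasing is the same comparison), and order preservation plus induction. The only variation is that beyond $\sqrt e$ you use monotonicity of $F$ with $F(e^{-1/2})=0$, whereas the paper writes the expression out as $\frac{\beta_*}{2}(1/t-t/R_*^2)$; your closed forms for the integral and for $H$ on the core check out.
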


\begin{proof}
Direct integration of the two nonzero pieces in \cref{eq:critical-profile} verifies the fixed-point identity on $0<t\le R_*$.
For $t>R_*$, the untruncated expression is
\begin{equation}\label{eq:outside-critical-support} \frac{\beta_*}{2}\left(\frac{1}{t}-\frac{t}{R_*^2}\right)\le0, \end{equation}
so the positive part is zero there.
The inequality $D_0\le D_*$ is equality up to $1/R_*$; on the remaining interval where $D_0$ is positive, $D_*'(t)=-1/(2t)\ge-\beta_*=D_0'(t)$, and afterwards $D_0=0$.
The integral in \cref{eq:critical-fixed-equations} follows by elementary calculation.
Finally, order preservation of $T_{\beta_*}$ and induction give $D_h\le D_*$.
\end{proof}

Thus $1+\sqrt e/2$ is the exact threshold of the homogeneous complete-bipartite multi-alternation recurrence.
This is a statement about the proof framework, not an upper bound for online vertex cover.
A stronger problem lower bound would require a hard family that is not dominated by the critical profile in \cref{eq:critical-profile}.

\section{Discussion and Open Problems}\label{sec:discussion}

The lower bound isolates a source of online cost that is already present in bipartite graphs: irrevocable marginal weight accumulates across alternating complete-bipartite blocks before the adversary reveals which side will serve as the cheap offline cover.
The finite extension lemma shows that no infinite input or adaptive access to a random execution is needed.
The fixed-point analysis then converts arbitrarily deep alternation into the closed-form threshold $\sqrt e/2$.

The result narrows, but does not close, the gap to the approximately $1.901$ algorithm of \citet{WangWong2015}.
It also leaves open whether randomized integral algorithms on general graphs can exploit or suffer from constraints beyond the edge relaxation.
Odd-cycle inequalities provide additional restrictions on marginals, but inserting internal edges also changes the offline comparator, so those inequalities do not automatically strengthen the present recurrence.
A non-bipartite improvement may require a construction in which mistakes from several partially revealed minimum covers overlap and accumulate without an equal increase in the offline comparator.

At the critical value, the explicit profile $D_*$ leaves a positive slack between its integral and $\beta_*$.
This does not imply the existence of an algorithm with ratio $1+\sqrt e/2$; it only explains why further alternations of the same homogeneous complete-bipartite form stop improving the lower bound.
Closing the remaining gap will require either a nonhomogeneous recurrence, a different family of prefixes, or an algorithmic invariant that matches the stronger lower bound.

\appendix

\section{A Directed-Rounding Regression Certificate}\label{app:certificate}

The analytic proof above does not depend on numerical computation.
As a reproducibility check, a directed integer implementation lower-bounds the truncated recurrence on a uniform grid.
Let $n$ be the number of grid cells per unit interval, let $T$ be a finite cutoff, and suppose the step function $d_h$ is a certified lower bound for $D_h$ on every cell $[j/n,(j+1)/n]$.
Before the first budget violation, $d_h$ is nonincreasing and has integral at most $\beta$.
The raw update
\begin{equation}\label{eq:raw-update} G_h(t)=1-\beta t+t\int_{1/t}^{\infty}d_h(u)\,du \end{equation}
is then nonincreasing within every target cell.
Indeed, with $q=1/t$, its derivative there is $-\beta+H_{d_h}(q)+q d_h(q)\le0$, because $q d_h(q)\le\int_0^q d_h$ and $\int_0^\infty d_h\le\beta$.
Thus the cell's right endpoint gives a valid lower bound.

At the right endpoint $t=a/n$, where $a=j+1$, set $k=\lfloor n^2/a\rfloor$.
Retaining the fractional part of the reciprocal cell gives the exact nonnegative tail contribution
\begin{equation}\label{eq:endpoint-tail} t\int_{1/t}^{T}d_h(u)\,du=\frac{a\sum_{r=k+1}^{nT-1}d_{h,r}+d_{h,k}\bigl(a(k+1)-n^2\bigr)}{n^2}. \end{equation}
Storing all cell values on a common integer scale and rounding only this nonnegative contribution downward preserves rigor at every iteration.

With
\begin{equation}\label{eq:certificate-parameters} \beta=\frac{82435}{100000}=0.82435,\qquad n=24000,\qquad T=2, \end{equation}
the first certified violation occurs at $h=754$:
\begin{equation}\label{eq:certificate-values} L_{753}=0.809481094841418\ldots\le\beta,\qquad L_{754}=0.885466040873602\ldots>\beta. \end{equation}
This check is conservative because the complete tail beyond $T=2$ is discarded.
It supplies a near-threshold regression test for the recurrence and corresponds to at most $756$ alternating blocks; the exact theorem follows from the fixed-point argument, not from these parameters.

\bibliographystyle{plainnat}
\bibliography{references}

@article{BuchbinderNaor2009,
  author = {Niv Buchbinder and Joseph Naor},
  title = {Online Primal-Dual Algorithms for Covering and Packing},
  journal = {Mathematics of Operations Research},
  volume = {34},
  number = {2},
  pages = {270--286},
  year = {2009}
}

@article{DemangePaschos2005,
  author = {Marc Demange and Vangelis Th. Paschos},
  title = {On-Line Vertex-Covering},
  journal = {Theoretical Computer Science},
  volume = {332},
  number = {1--3},
  pages = {83--108},
  year = {2005}
}

@article{KarlinEtAl1994,
  author = {Anna R. Karlin and Mark S. Manasse and Lyle A. McGeoch and Susan Owicki},
  title = {Competitive Randomized Algorithms for Nonuniform Problems},
  journal = {Algorithmica},
  volume = {11},
  number = {6},
  pages = {542--571},
  year = {1994}
}

@inproceedings{GamlathEtAl2019,
  author = {Buddhima Gamlath and Michael Kapralov and Andreas Maggiori and Ola Svensson and David Wajc},
  title = {Online Matching with General Arrivals},
  booktitle = {Proceedings of the 60th Annual IEEE Symposium on Foundations of Computer Science},
  pages = {26--37},
  publisher = {IEEE Computer Society},
  year = {2019}
}

@inproceedings{KarpEtAl1990,
  author = {Richard M. Karp and Umesh V. Vazirani and Vijay V. Vazirani},
  title = {An Optimal Algorithm for On-Line Bipartite Matching},
  booktitle = {Proceedings of the 22nd Annual ACM Symposium on Theory of Computing},
  pages = {352--358},
  publisher = {ACM},
  year = {1990}
}

@inproceedings{TangZhang2024,
  author = {Zhihao Gavin Tang and Yuhao Zhang},
  title = {Improved Bounds for Fractional Online Matching Problems},
  booktitle = {Proceedings of the 25th ACM Conference on Economics and Computation},
  pages = {279--307},
  publisher = {ACM},
  year = {2024}
}

@misc{Tang2026Matching,
  author = {Tang, Zhihao Gavin},
  title = {Optimal Competitive Ratio of Two-Sided Online Bipartite Matching},
  year = {2026},
  eprint = {2602.18049},
  archivePrefix = {arXiv},
  primaryClass = {cs.DS}
}

@misc{TangZhang2026,
  author = {Zhihao Gavin Tang and Yuhao Zhang},
  title = {A Tight Bound on Online Vertex Cover under Edge Arrivals},
  year = {2026},
  eprint = {2608.04994},
  archivePrefix = {arXiv},
  primaryClass = {cs.DS}
}

@inproceedings{WangWong2015,
  author = {Wang, Yajun and Wong, Sam Chiu-wai},
  title = {Two-Sided Online Bipartite Matching and Vertex Cover: Beating the Greedy Algorithm},
  booktitle = {Automata, Languages, and Programming (ICALP 2015), Part I},
  series = {Lecture Notes in Computer Science},
  volume = {9134},
  pages = {1070--1081},
  publisher = {Springer},
  year = {2015}
}

\end{document}